\def\p{{\partial }}
\def\bx{{\mathbf {x} }}
\def\by{{\mathbf {y} }}
\def\bu{{\mathbf {u} }}
\def\bv{{\mathbf {v} }}
\def\bA{{\mathbf {A} }}
\def\bB{{\mathbf {B} }}
\def\bD{{\mathbf {D} }}
\def\bE{{\mathbf {E} }}
\def\bH{{\mathbf {H} }}
\def\bJ{{\mathbf {J} }}
\def\bP{{\mathbf {P} }}
\def\bS{{\mathbf {S} }}
\def\bbH{{\mathbb {H} }}
\def\cH{{\mathcal {H} }}
\def\cL{{\mathcal {L} }}
\def\bsxi{{\boldsymbol {\xi} }}
\def\bsPi{{\boldsymbol {\Pi} }}
\def\bseta{{\boldsymbol {\eta} }}
\def\bsbeta{{\boldsymbol {\beta} }}
\def\bsgamma{{\boldsymbol {\gamma} }}
\def\bsomega{{\boldsymbol{\omega} }}
\def\bom{{\boldsymbol{\omega} }}
\def\bsvarpi{{\boldsymbol{\varpi} }}
\def\bsvt{{\boldsymbol{\tilde{v}} }}
\def\dd{{\color{red}\mathsf d}}
\def\de{\delta}
\providecommand{\U}[1]{\protect\rule{.1in}{.1in}}
\newtheorem{theorem}{Theorem}
\newtheorem{definition}[theorem]{Definition}
\newtheorem{remark}[theorem]{Remark}
\def\thefigure{\thesection.\@arabic\c@figure}
\def\fps@figure{h, t}
\def\thetable{\thesection.\@arabic\c@table}
\def\fps@table{h, t}
\begin{document}

\title{Stochastic evolution of Augmented Born-Infeld Equations}
\author{Darryl D. Holm}

\address{DDH: Department of Mathematics, Imperial College, London SW7 2AZ, UK.}

\date{18 July 17}

\begin{abstract}
This paper compares the results of applying a recently developed method of stochastic uncertainty quantification designed for fluid dynamics to the Born-Infeld model of nonlinear  electromagnetism. The similarities in the results are striking. Namely, the introduction of Stratonovich cylindrical noise into each of their Hamiltonian formulations introduces stochastic Lie transport into their dynamics in the same form for both theories. Moreover, the resulting stochastic partial differential equations (SPDE) retain their unperturbed form, except for an additional term representing  induced Lie transport by the set of divergence-free vector fields associated with the spatial correlations of the cylindrical noise. 

The explanation for this remarkable similarity lies in the method of construction of the Hamiltonian for the Stratonovich stochastic contribution to the motion in both cases; which is done via pairing spatial correlation eigenvectors for cylindrical noise with the momentum map for the deterministic motion. This momentum map is responsible for the well-known analogy between hydrodynamics and electromagnetism. The momentum map for the Maxwell and Born-Infeld theories of electromagnetism treated here is the 1-form density known as the Poynting vector. Two Appendices treat the Hamiltonian structures underlying these results. 
\end{abstract}

\date{\normalsize \today\\
Keywords: Geometric mechanics; stochastic processes; 
uncertainty quantification; fluid dynamics; electromagnetic fields;
\\ \bigskip
Mathematics Subject Classification: 37H10 - 37J15 - 60H10
}

\maketitle

\tableofcontents

\doublespacing 

\section{Introduction}

Physics is an observational science. Hence, one may be led to consider how the modern stochastic methods for uncertainty quantification and data assimilation currently being developed for large scale observational sciences such as weather forecasting and climate change might be applied in foundational classical physics models, such as Euler's fluid vorticity equations and Maxwell's electromagnetic field equations. One might also wonder what mathematical differences may arise in the approaches for quantifying uncertainty in two such different foundational models, one concerning swirling fluids and the other concerning electromagnetic waves propagating in a vacuum. In addition, one might wonder about the role of mathematical structure in the formulation of stochastic methods of uncertainty quantification for two such different models. 

We will address these questions here by comparing the stochastic equations developed for quantifying uncertainty in the nonlinear dynamics of the ideal Euler fluid equations with the corresponding stochastic equations for the Born-Infeld electromagnetic field equations. Of course, the physics of these two models is fundamentally different. The deterministic Born-Infeld model arose in quantum field theory (QFT) and is closely related to string theory. It's QFT origins are discussed, e.g., in \cite{Bial1984,Bial1992} and its relationship to string theory is discussed in \cite{Pol1998,Gi2001}. Its analytical properties have been reviewed recently in \cite{Kiessling2011}. Its parallels with hydrodynamics are discussed in \cite{Arek1989,Br2002,Br2004,BrYo2005}. 

The theories being discussed here all share the same conceptual framework. Born-Infeld electromagnetism, string theory and ideal fluid dynamics are all Hamiltonian theories, whose symmetries enable reduction to variables that are invariant under a Lie group. In addition, the Born-Infeld field equations imply augmented equations for energy and momentum conservation that are reminiscent of conservation laws in fluid dynamics, as shown, e.g., in \cite{Arek1989,Br2002}. 

The shared Hamiltonian structure of the \emph{augmented Born-Infeld} (ABI) equations and ideal fluid dynamics will put the Hamiltonian approach discussed here for introducing stochastic uncertainty quantification methods into a common framework. Here, we will concentrate on introducing stochasticity by using the parallels between the Born-Infeld field equations and hydrodynamics. As a result, the stochastic version will turn out to conserve the deterministic Born-Infeld energy. The preservation of other analytical properties of the Born-Infeld field equations under the addition of this type of stochasticity will  be explored elsewhere. The possibilities for applying this structure-preserving stochastic Hamiltonian approach to estimating the unknown effects of unobserved degrees of freedom and quantifying uncertainty in string theory will also be explored elsewhere.

The aim of this paper is to determine how the association of the Born-Infeld equations for nonlinear electromagnetism with their augmented hydrodynamic counterparts discussed in \cite{Br2002} will inform us about how to add noise to the evolution of the Born-Infeld displacement flux and magnetic flux. The approach will rely on via the variational and Hamiltonian method introduced for hydrodynamics in \cite{Holm2015}, which adds noise geometrically, by a canonical transformation.  In particular, stochasticity will be introduced by making the electromagnetic flux fields $\bD$ and $\bB$ evolve under a Stratonovich stochastic flow, to be transported by stochastic vector fields carrying spatial statistical correlation information, via the Lie-derivative operation of vector fields on 2-forms $\bD$ and $\bB$. The transport obtained via the action of these spatially correlated stochastic vector fields will be implemented as a canonical transformation generated by a familiar momentum map from the electromagnetic field variables to the hydrodynamics variables; namely, the Poynting vector, $\bP:=\bD\times \bB$. 

This paper compares the effects on the equations of motion of introducing stochasticity as cylindrical Stratonovich noise \cite{Bi1981} into the Hamiltonian formulations of either Euler's equations for fluid vorticity, or the Born-Infeld electromagnetic field equations, \cite{BoIn1934}. This may seem like an unlikely comparison. However, because of an intriguing hydrodynamic analogue for electromagnetic waves, the comparison turns out to be closer than one might have thought at first glance. 

The deterministic Euler's equations for fluid vorticity, $\bsomega={\rm curl}\bu$, with divergence free Eulerian fluid velocity, $\bu$, are given in three dimensions (3D) by
\begin{align}
(\partial_t + \mathcal{L} _{u}) (\bsomega \cdot d\bS) 
= \Big(\partial_t \bsomega - \mathrm{curl}
\,(\bu\times\bsomega)\Big) \cdot d\bS = 0\,,
\quad\hbox{with}\quad
{\rm div}\bu = 0
\,.
\label{Euler-vort-eqn-2form-deter}%
\end{align}
Here, $\mathcal{L} _{u}$ denotes the Lie derivative with respect to Eulerian fluid velocity vector field $u=\bu\cdot\nabla$. Equation \eqref{Euler-vort-eqn-2form-deter} is interpreted as the familiar Lie transport of the vorticity flux, $\bsomega \cdot {\rm d}\bS = {\rm d}(\bu\cdot d\bx)$, an exact 2-form, moving with the fluid, carried by its corresponding velocity field, $\bu={\rm curl}^{-1}\bsomega$.

The Born-Infeld electromagnetic field equations were introduced in \cite{BoIn1934} as,
\begin{align}
\partial_t \bD = {\rm curl}\,\bH
\,,\quad
\partial_t \bB = -\,{\rm curl}\,\bE
\quad\hbox{with}\quad
{\rm div}\bD = 0 = {\rm div}\bB
\,,
\label{BIE-set-deter}
\end{align}
for electromagnetic fields $\bD$, $\bE$, $\bB$, $\bH$. These equations may be interpreted in the classical sense of integrals of space-time dependent 2-forms over fixed spatial domains, as
\begin{align}
\partial_t (\bD\cdot d\bS) &= {\rm d}(\bH\cdot d\bx)
\,,\quad
\partial_t (\bB\cdot d\bS) = -\,{\rm d}(\bE\cdot d\bx)
\label{BIE-set-deter-int}
\end{align}
with closed 2-forms
\begin{align}
{\rm d}(\bD\cdot d\bS) &={\rm div}\bD\,d^3x = 0 
= {\rm d}(\bB\cdot d\bS) = {\rm div}\bB\,d^3x
\,,
\label{BIE-2forms}
\end{align}
where ${\rm d}$ denotes the spatial differential (exterior derivative).
The formulas in \eqref{BIE-set-deter-int} are familiar from the weak-field limit of the Born-Infeld equations, which yields the source-free Maxwell equations, in the absence of free electrical charges and currents. In particular, the surface elements and line elements, $d\bS$ and $d\bx$, respectively,  in equations \eqref{BIE-set-deter-int} are \emph{fixed in space}; while, in contrast, $d\bS$ and $d\bx$ may be interpreted as  \emph{moving with the fluid} in equations \eqref{Euler-vort-eqn-2form-deter}, because of the Lie derivative operation, $\mathcal{L} _{u}$. 

Comparisons between Euler's fluid equations and Maxwell's field equations have been an intriguing issue in the physics literature ever since the mid-19th century. For a recent historical survey of these comparisons, see \cite{Si1991}. For further mathematical relations between hydrodynamics and the Born-Infeld model, see \cite{Br2004} and \cite{BrYo2005}.  For an in-depth, special relativistic treatment which includes interactions of Maxwell fields with fluid dynamics, see \cite{Holm1987}. 

The present paper will offer yet another aspect of these comparisons, by investigating how the introduction of stochasticity, representing various types of uncertainty, will affect the evolutionary operators in the two sets of equations. 

Euler's fluid equations and Maxwell's field equations both arise via Hamiltonian reduction by symmetry. On one hand, Eulerian fluid dynamics possesses relabelling symmetry, which allows reduction by symmetry in transforming from Lagrangian to Eulerian fluid variables. Namely, the Eulerian fluid variables are invariant under relabelling of the Lagrangian fluid particles. On the other hand, classical electromagnetic theory possesses gauge symmetry, which allows reduction by transforming from the potentials to the fields, the latter being invariant under gauge transformations of the potentials (Weyl symmetry).  The Hamiltonian structures resulting from these two types of symmetry reduction are quite different. However, they each result in a map from canonical field variables to a momentum variable taking values in the space of 1-form densities, dual to vector fields with respect to $L^2$ pairing. This property of sharing a \emph{momentum map} from the canonical field variables to a momentum density will provide an avenue for introducing a stochastic vector field into both models by using their shared Hamiltonian structure.

{\bf Plan.} In the remainder of the paper, section \ref{sec-fluids} sketches the method of \cite{Holm2015} for introducing  Stratonovich noise into Hamiltonian dynamics of nonlinear field theories by using momentum maps dual to vector fields. This approach is illustrated by comparing its results for two apparently different theories; namely, ideal fluid dynamics in \ref{sec-fluids} and Born-Infeld electromagnetism in section \ref{sec-ABI}. The main part of the paper concludes and summarise the results in section \ref{sec-conclude-sum}. Lagrangian and Hamiltonian variational formulations of ABI are provided in Appendix \ref{sec-var-form}. The corresponding results for the high field MHD limit of the ABI equations are discussed in the Appendix \ref{sec-MHD-form}.  

 The fundamentals of the Hamiltonian structures for the two theories are reviewed from first principles for fluid dynamics in \cite{HoMaRa1998} and for the Born-Infeld theory in Appendices \ref{sec-var-form} and \ref{sec-MHD-form}. 
These Appendices derive the connection between the Hamiltonian structures for the Born-Infeld  equations (canonical Poisson bracket) and ideal hydrodynamics (Lie-Poisson bracket). Appendix \ref{sec-var-form} does this in general, and Appendix \ref{sec-MHD-form} discusses the high-field limit. A connection to magnetohydrodynamics is also revealed in the high-field limit discussed in Appendix \ref{sec-MHD-form}. 

The Poynting vector momentum map is the key to understanding the analogy between electromagnetism and hydrodynamics. To augment the Hamiltonian operator for Born-Infeld electromagnetism to include the Poynting vector, we follow a mathematical approach introduced in Krishnaprasad and Marsden \cite{KrMa1987} for deriving the dynamics of a rigid body with flexible attachments. This approach leads to a compound Poisson structure that may be written as the sum of a canonical structure and Lie-Poisson structure obtained from a cotangent-lift momentum map in which variations are applied independently. This augmented Poisson bracket for the Born-Infeld theory provides a fundamental explanation of the hydrodynamic analogy for electromagnetism. Namely, the cotangent-lift momentum map to the Poynting vector, corresponding to Lie transformations of the  canonical electromagnetic variables by smooth vector fields, leads via  \cite{KrMa1987} to an augmented Poisson bracket for electromagnetism which satisfies the same semidirect-product Lie-Poisson bracket relations as those found in ideal continuum dynamics \cite{HoMaRa1998}.

\section{Stochastic fluid dynamics} \label{sec-fluids}
A variational approach to stochastic fluid dynamics has recently been derived in \cite{Holm2015} and its remarkable analytical properties have been investigated in \cite{CrFlHo2017} for the particular case of the 3D stochastic Euler fluid equation, given in terms of the stochastic time derivative $\dd$ by 
\begin{align}
0 = (\dd + \mathcal{L} _{\dd{\by}_{t}}) (\boldsymbol{\omega} \cdot
d\boldsymbol{S}) 
= \Big(\dd \boldsymbol{\omega} - \mathrm{curl}
\,(\dd{\by}_{t}\times\boldsymbol{\omega})\Big) \cdot d\boldsymbol{S} \,,
\label{Euler-vort-eqn-2form-stoch}%
\end{align}
with the \textit{Stratonovich} stochastic, divergence-free vector field,
\begin{align}
\dd{\by}_{t} = \bu({\by}_{t},t)dt + \sum_{i} \bsxi_{i}({\by}_{t}) \circ dW^{i}_{t}
\,, \label{StochVF}%
\end{align}
in which each of the summands has zero divergence.
Here, $\dd{\by}_{t}$ denotes the stochastic process and the second term in \eqref{StochVF} constitutes cylindrical Stratonovich noise, in which the amplitude of the noise depends on space, but not time. An immediate consequence of the stochastic fluid equation in \eqref{Euler-vort-eqn-2form-stoch} is a stochastic version of the Kelvin circulation theorem, so that 
\begin{align}
\dd \!\!\int_{c(\dd{\by}_{t} )} \bu(\bx,t) \cdot d\bx = 0
\,, \label{Stoch-Kel-Thm}%
\end{align}
for any fluid material loop $c(\dd{\by}_{t} )$ moving with the Stratonovich stochastic vector field $\dd{\by}_{t}$ in \eqref{StochVF}. 

For in-depth treatments of cylindrical noise, see \cite{Pa2007,Sc1988}. In our case, the $\bsxi_{i}({\by}_{t})$, $i=1,2,\dots,N$, appearing in the stochastic vector field in \eqref{StochVF} comprise $N$ prescribed, time independent, divergence-free vectors which ideally may be obtained from data measured at fixed points $\bx$ along the Lagrangian path ${\by}_{t}$. For example, one may take  the $\bsxi_{i}(\bx)$ to be Emperical Orthogonal Functions (EOFs), which are eigenvectors of the velocity-velocity correlation tensor for a certain measured flow with stationary statistics \cite{HaJoSt2007}. The $\bsxi_{i}(\bx)$ may also be obtained numerically by comparisons of Lagrangian trajectories at fine and coarse space and time scales \cite{CoCrHoShWe2017}.

It may not be surprising that the variational introduction of cylindrical Stratonovich noise into Euler's fluid equation proposed  in \cite{Holm2015} for fluids has simply introduced an additional, stochastic vector field $\sum_{i} \bsxi_{i}(\bx) \circ dW^{i}_{t}$ into equation \eqref{StochVF} which augments the Lie transport in equation \eqref{Euler-vort-eqn-2form-stoch} in the Eulerian representation, while preserving its Hamiltonian geometric structure and many of its analytical properties. After all, the essence of Euler fluid dynamics is Lie transport \cite{HoMaRa1998}.  However, it might be more surprising if the variational introduction of noise into the Born-Infeld electromagnetic field equations turned out to introduce the same sort of stochastic Lie transport, for example, in the displacement current. Investigating this issue and explaining it will be our concern for the remainder of the paper. 

\section{Augmented Born-Infeld (ABI) equations} \label{sec-ABI}

\subsection{Deterministic Born-Infeld equations}
Below, we will introduce stochasticity into the deterministic Born-Infeld equations, which may be written in Hamiltonian form in the rest frame as \cite{BoIn1934}
\begin{align}
\begin{split}
\begin{bmatrix}  
\partial_t \bD
\\ 
\partial_t \bB
\end{bmatrix}
=
\begin{bmatrix}
\{ \bD\,,\, \bbH(\bD,\bB)\,\}
\\ 
\{ \bB\,,\, \bbH(\bD,\bB)\,\}
\end{bmatrix}
=
\begin{bmatrix}
0   & {\rm curl}
\\ 
-\,{\rm curl} & 0
\end{bmatrix}
\begin{bmatrix}
{\delta \bbH}/{\delta \bD}
\\ 
{\delta \bbH}/{\delta \bB}
\end{bmatrix}
\quad\hbox{with}\quad
{\rm div}\bD = 0 = {\rm div}\bB
\,.
\end{split}
\label{BI-eqns-Ham}
\end{align}
The divergence free conditions on $\bD$ and $\bB$ in \eqref{BI-eqns-Ham} continue to hold, provided they hold initially. As discussed in \cite{MaRa1994}, the Poisson bracket in \eqref{BI-eqns-Ham} was initially due to \cite{Pauli1933} and it may be written equivalently as 
\begin{align}
\Big\{F,K \Big\}(\bD,\bB)
=
\int \frac{\delta F}{\delta \bD}\cdot {\rm curl}\frac{\delta K}{\delta \bB}
- \frac{\delta K}{\delta \bD}\cdot {\rm curl}\frac{\delta F}{\delta \bB}
\,d^3x\,.
\label{BI-eqns-PB-B}
\end{align}
The Poisson bracket \eqref{BI-eqns-PB-B} is equivalent to the canonical Poisson bracket in terms of the magnetic vector potential, $\bA$  and (minus) the displacement vector, $-\bD$. 
\begin{align}
\Big\{F,K \Big\}(\bD,\bA)
=
\int \frac{\delta F}{\delta \bD}\cdot \frac{\delta K}{\delta \bA}
- \frac{\delta K}{\delta \bD}\cdot \frac{\delta F}{\delta \bA}
\,d^3x\,,
\label{BI-eqns-PB-A}
\end{align}
under the change of variable $\bB={\rm curl}\bA$. Geometrically, the magnetic vector potential, $\bA$ 
(resp. displacement vector, $\bD$) defines the components of a 1-form $A=\bA\cdot d\bx$ (resp. a 2-form $D=\bD\cdot d\bS$).
For more details about the Born-Infeld Hamiltonian structure, see Appendix \ref{sec-var-form}. 
See also \cite{Di1960} for a manifestly Lorentz invariant formulation of the Born-Infeld field theory. 

In terms of the electromagnetic fields, $(\bD,\bB)$, the Born-Infeld Hamiltonian $\bbH(\bD,\bB)$ is given by
\begin{align}
\bbH (\bD,\bB) = \int \cH(\bD,\bB) \,d^3x
\quad\hbox{where}\quad
\cH(\bD,\bB) = \sqrt{1 + |\bD|^2 + |\bB|^2 + |\bD\times \bB|^2} 
\,.
\label{BI-Ham}
\end{align}
The variational derivatives of the Hamiltonian $\bbH$ required in \eqref{BI-eqns-Ham} are given by
\begin{align}
\bE = \frac{\delta \bbH}{\delta \bD} = \frac{\bD + \bB\times \bP}{\cH}
\quad\hbox{and}\quad
\bH = \frac{\delta \bbH}{\delta \bB} = \frac{\bB - \bD\times \bP}{\cH}
\,,\label{var-deriv-Ham}
\end{align}
where $\bP$ is the Poynting vector, given by
\begin{align}
\bP  = \bD\times \bB = \bE\times \bH
\,.
\label{P-def}
\end{align}
Thus, the Born-Infeld equations comprise four equations in fixed Eulerian coordinates,
\begin{align}
\partial_t \bD = {\rm curl}\,\bH
\,,\quad
\partial_t \bB = -\,{\rm curl}\,\bE
\quad\hbox{with}\quad
{\rm div}\bD = 0 = {\rm div}\bB
\,.
\label{BIE-set}
\end{align}

\begin{remark}[Conservation laws, \cite{Br2004}]
The dynamical equations for the energy density $\cH$ and the Poynting vector $\bP$ (momentum density, also energy flux density) may be written in conservative form, as
\begin{align}
\begin{split}
&\partial_t \cH + {\rm div}\,\bP = 0
\,,\\&
\partial_t \bP + {\rm div}\,
\left( \frac{\bP\otimes\bP}{\cH} - \frac{\bD\otimes\bD}{\cH} - \frac{\bB\otimes\bB}{\cH}
\right)
= \nabla \left( \frac{1}{\cH}\right) 
\,.
\end{split}
\label{cH+bP-cons}
\end{align}
Respectively, these relations imply conservation laws for the Born-Infeld total energy $\int\cH \,d^3x$ and total momentum $\int\bP \,d^3x$, arising due to Noether symmetries under time and space translation invariance of the Born-Infeld Hamiltonian, $\bbH(\bD,\bB)$, in \eqref{BI-Ham}. The union of the sets of the Born-Infeld equations \eqref{BIE-set} and the local conservation laws \eqref{cH+bP-cons} is called the \emph{Augmented Born-Infeld (ABI) equations} in \cite{Br2004,BrYo2005}.
\end{remark}
\begin{remark}[Hydrodynamic analogy]
From their equations in \eqref{cH+bP-cons} we see that $\cH$ is a scalar density, while $\bP$ is a 1-form density.
Thus, we may write these equations in a more geometric form, reminiscent of hydrodynamics, upon introducing the  vector notation $\bv=\bP/\cH$, $\bsgamma := \bD/\cH$ and $\bsbeta := \bB/\cH$. Equations \eqref{cH+bP-cons} then may be written equivalently as
\begin{align}
\begin{split}
(\partial_t + \cL_v)(\cH\,d^3x) &= 0
\,,\\
(\partial_t + \cL_v)(\bv\cdot d\bx ) 
- \cL_\gamma (\bsgamma\cdot d\bx)
- \cL_\beta (\bsbeta\cdot d\bx)
&=
\frac12 d\left( \cH^{-2}  + |\bv|^2 - |\bsgamma|^2   - |\bsbeta|^2  \right)
\,.\end{split}
\label{cH+bP-geom}
\end{align}
Here $\bv=\bP/\cH$ is a velocity vector and $\cL_v(\bv\cdot d\bx )$ is the Lie derivative of the 1-form $(\bv\cdot d\bx )$ with respect to the vector field $v$, whose components are given by $v=\bv\cdot\nabla=v^j\partial_j$. Namely,
\begin{align}
\cL_v(\bv\cdot d\bx ) = \big((\bv\cdot \nabla)\bv + v_j\nabla v^j\big)\cdot d\bx
= \big( -\bv \times {\rm curl}\,\bv + \nabla (|\bv|^2) \big)\cdot d\bx\,.
\label{Lie-deriv-1form}
\end{align}
Formulas analogous to \eqref{Lie-deriv-1form} also exist for the vectors $\bsgamma := \bD/\cH$ and $\bsbeta := \bB/\cH$, in computing Lie derivatives with respect to vector fields $\gamma=\bsgamma\cdot\nabla$ and $\beta=\bsbeta\cdot\nabla$ applied to the 1-forms $(\bsgamma\cdot d\bx)$ and $(\bsbeta\cdot d\bx)$, respectively.

Thus, the Born-Infeld evolution equations for $\cH$ and $\bP$ in geometric form \eqref{cH+bP-geom} are analogous to similar equations in ideal fluid dynamics. Following this hydrodynamic analogy for the Born-Infeld equations, the corresponding Born-Infeld Kelvin circulation theorem may be found by integrating equation \eqref{cH+bP-geom} around a closed loop $c(v)$ moving with the velocity $\bv(\bx,t)$, to obtain 
\begin{align}
\frac{d}{dt}\oint_{c(v)} (\bv\cdot d\bx ) 
+ \oint_{c(v)} \big(\bsgamma\times {\rm curl}\bsgamma 
+  \bsbeta\times {\rm curl}\bsbeta \big) \cdot d\bx
= 0\,,
\label{BI-KelThm}
\end{align}
where we have used the fundamental theorem of calculus $\oint d\alpha = 0$ for the loop integral of the differential of any scalar function $\alpha$ to evaluate the right hand side. The Born-Infeld Kelvin circulation theorem \eqref{BI-KelThm} implies that the circulation of $\bv=\bP/\cH$ around a loop moving with the flow of $\bv$ will in general not be preserved, unless the sum of the cross products of the velocities $\bsgamma=\bD/\cH$ and $\bsbeta=\bB/\cH$ with their respective curls is proportional to the gradient of a scalar function. 

Yet another fluidic analogy to the vorticity equation \eqref{Euler-vort-eqn-2form-deter} may be obtained in terms of a vorticity $\bsvarpi:={\rm curl}\bv$ by applying Stokes theorem to equation \eqref{BI-KelThm}, to find
\begin{align}
(\partial_t + \mathcal{L} _{v}) (\bsvarpi \cdot d\bS) 
= \Big(\partial_t \bsvarpi - \mathrm{curl}
\,(\bv\times\bsvarpi)\Big) \cdot d\bS 
= 
-\,{\rm curl}\Big( \bsgamma\times {\rm curl}\bsgamma + \bsbeta\times {\rm curl}\bsbeta \Big) \cdot d\bS 
\,.
\label{BI-vort-eqn-2form-deter}%
\end{align}

\end{remark} 

\begin{remark} 
The Born-Infeld equations \eqref{BI-eqns-Ham} comprise a nonlinear deformation of Maxwell's equations. Indeed, Maxwell's equations may be recovered from the variational equations \eqref{var-deriv-Ham} for appropriately small magnitudes $|\bD|\ll1$ and $|\bB|\ll1$, for which ${\delta \bbH} /{\delta \bD} \to \bD$ and ${\delta \bbH}/{\delta \bB}\to\bB$. Conservation equations analogous to \eqref{cH+bP-cons} and  circulation equations analogous to \eqref{BI-KelThm} and \eqref{BI-vort-eqn-2form-deter} also hold for the Maxwell case. \end{remark}

To provide a geometric motivation for the fluid dynamics interpretation via the circulation theorem \eqref{BI-KelThm} for the deterministic augmented Born-Infeld equations in \eqref{cH+bP-cons}, we introduce a bit of standard terminology from geometric mechanics. 

\begin{definition}[Cotangent lift momentum map]
Suppose $G$ is a Lie group which acts on a configuration manifold $Q$ and, hence, on its canonical phase space $T^*Q$ by cotangent lifts. The corresponding momentum map $J(q,p)$ from the canonical phase space $T^*Q$ to the dual $\mathfrak{g}^*$ of the Lie algebra $\mathfrak{g}$ of Lie group $G$ is given by \cite{MaRa1994,Ho2011}
\begin{align}
\big\langle J(q,p)\,,\,\xi\big\rangle_{\mathfrak{g}}
= \big\langle\big\langle p_q\,,\,\Phi_\xi (q) \big\rangle\big\rangle_{TQ}
\,,
\label{CLmomap-def}
\end{align}
for $(q,p)\in T^*Q$, $\xi\in \mathfrak{g}$, $J(q,p)\in \mathfrak{g}^*$, $p_q$ the canonical phase space momentum at position $q$, $\Phi_\xi (q)$ the infinitesimal transformation of $q$ by $G$, and natural pairings $\langle\,\cdot\,,\cdot\,\rangle_{\mathfrak{g}}: \mathfrak{g}^*\times \mathfrak{g} \to \mathbb{R}$ and $\langle\langle\,\cdot\,,\cdot\,\rangle\rangle_{TQ}: T^*Q\times TQ \to \mathbb{R}$.
\end{definition}

In the Born-Infeld case, we replace the Lie group $G$ in the definition above by the diffeomorphisms ${\rm Diff}(\mathbb{R}^3)$. We then define the canonical phase space $T^*\mathcal{D}$ as the set of pairs $(\bD,\bA)$ whose canonical Poisson bracket is given in \eqref{BI-eqns-PB-A} and take the infinitesimal transformation $\Phi_\xi (q)$ to be $-\mathcal{L}_\xi D$; namely, (minus) the Lie derivative of the closed 2-forms $D=(\bD\cdot d\bS)$ by the divergence free vector fields, $\xi:=\bsxi\cdot\nabla\in\mathfrak{X}(\mathbb{R}^3)$ with ${\rm div}\bsxi(\bx)=0$. In terms of these variables, we find the following. 

\begin{theorem}
The 1-form density $P:=\bP\cdot d\bx \otimes d^3x$ with components given by the Poynting vector $\bP=\bD\times\bB$ defines a cotangent lift momentum map, $T^*\mathcal{D} \to \mathfrak{X}^*$, from the canonical phase space $T^*\mathcal{D}$ identified with the set of pairs $(\bD,\bA)$ whose canonical Poisson bracket is given in \eqref{BI-eqns-Ham}, to the dual space $\mathfrak{X}^*(\mathbb{R}^3)$ of the smooth vector fields $\mathfrak{X}(\mathbb{R}^3)$ with respect to the $L^2$ pairing. 
\end{theorem}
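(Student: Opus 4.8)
The plan is to verify directly that $P=\bP\cdot d\bx\otimes d^3x$ satisfies the defining identity \eqref{CLmomap-def} for the cotangent-lift action of ${\rm Diff}(\mathbb{R}^3)$ on $T^*\mathcal{D}$ set up just above the statement; by the Definition, this verification \emph{is} the proof that $P$ is a cotangent lift momentum map, since equivariance and the Poisson property come for free for cotangent lifts. First I would pin down the realization of the phase space: take the configuration manifold $\mathcal{D}$ to be the closed $2$-forms $D=\bD\cdot d\bS$ (so ${\rm div}\,\bD=0$), with $T_D\mathcal{D}$ the closed $2$-forms $\delta D$ and cotangent fibre $T^*_D\mathcal{D}$ represented by vector potentials $\bA$ (determined modulo gradients, equivalently by $\bB={\rm curl}\,\bA$) through the $L^2$ pairing $\big\langle\big\langle\bA,\delta D\big\rangle\big\rangle_{TQ}=\int\bA\cdot\delta\bD\,d^3x$. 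With this pairing, the cotangent Poisson structure on $T^*\mathcal{D}$ is exactly the canonical bracket \eqref{BI-eqns-PB-A}, with $\bD$ playing the role of position and $\bA$ that of conjugate momentum. The infinitesimal generator of the pullback action of $\xi=\bsxi\cdot\nabla$ is $\Phi_\xi(D)=-\cL_\xi D$, which is again closed since ${\rm d}\,\cL_\xi D=\cL_\xi\,{\rm d}D=0$, so the action descends to $\mathcal{D}$ and lifts canonically to $T^*\mathcal{D}$.

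The second step is to translate the Lie derivative into vector calculus and evaluate the pairing. Since ${\rm d}D=({\rm div}\,\bD)\,d^3x=0$, Cartan's formula reduces to $\cL_\xi D={\rm d}\,\iota_\xi D$, and in $\mathbb{R}^3$ one has $\iota_\xi(\bD\cdot d\bS)=(\bD\times\bsxi)\cdot d\bx$ and ${\rm d}(\mathbf{V}\cdot d\bx)=({\rm curl}\,\mathbf{V})\cdot d\bS$ for any vector field $\mathbf{V}$, whence $\cL_\xi D=\big({\rm curl}(\bD\times\bsxi)\big)\cdot d\bS$. Inserting $\Phi_\xi(D)=-\cL_\xi D$ and $p_q=\bA$ into \eqref{CLmomap-def}, integrating by parts against the formally skew-adjoint curl, and invoking the scalar triple-product identity, I would obtain
\begin{align}
\big\langle J(\bD,\bA),\xi\big\rangle_{\mathfrak{X}}
&=\big\langle\big\langle\bA,\,\Phi_\xi(D)\big\rangle\big\rangle_{TQ}
=-\int\bA\cdot{\rm curl}(\bD\times\bsxi)\,d^3x
\nonumber\\
&=-\int({\rm curl}\,\bA)\cdot(\bD\times\bsxi)\,d^3x
=-\int\bB\cdot(\bD\times\bsxi)\,d^3x
\nonumber\\
&=\int\bsxi\cdot(\bD\times\bB)\,d^3x
=\int\bsxi\cdot\bP\,d^3x
=\big\langle P,\xi\big\rangle_{\mathfrak{X}}
\nonumber
\end{align}
for every $\xi\in\mathfrak{X}(\mathbb{R}^3)$, which is exactly the assertion $J=P$ in $\mathfrak{X}^*(\mathbb{R}^3)$; note that divergence-freeness of $\xi$ is nowhere used, so $P$ genuinely pairs with all of $\mathfrak{X}(\mathbb{R}^3)$, as the statement requires.

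Finally, I would add two remarks. If one prefers the dual realization of $T^*\mathcal{D}$ — with $\bA\cdot d\bx$ the base point and $-\bD$ the conjugate momentum — the same computation goes through after using $\cL_\xi(\bA\cdot d\bx)=\big(\nabla(\bsxi\cdot\bA)-\bsxi\times\bB\big)\cdot d\bx$ and discarding the exact term $\nabla(\bsxi\cdot\bA)$ against ${\rm div}\,\bD=0$; it yields the same $P$. And the only step that needs genuine care is the second one: getting the two $\mathbb{R}^3$ translation identities with the correct signs, observing that ${\rm div}\,\bD=0$ is exactly what lets one drop $\iota_\xi\,{\rm d}D$ in Cartan's formula (so the pullback action really descends to $\mathcal{D}$ and no stray term appears), and confirming that the boundary contributions in the integration by parts vanish for the admissible fields. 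Beyond this bookkeeping I expect no obstacle: the substance of the theorem is precisely this identification, whose point is that the Poynting vector $\bP=\bD\times\bB$ is the cotangent-lift image of the Born-Infeld canonical variables in $\mathfrak{X}^*(\mathbb{R}^3)$, which is what realizes the hydrodynamic analogy at the level of momentum maps.
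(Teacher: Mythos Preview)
Your proposal is correct and follows essentially the same approach as the paper: both verify the defining identity \eqref{CLmomap-def} by the chain $\langle\langle A,-\mathcal{L}_\xi D\rangle\rangle \leftrightarrow \int \bA\cdot{\rm curl}(\bsxi\times\bD)\,d^3x \leftrightarrow \int \bsxi\cdot(\bD\times\bB)\,d^3x$ via integration by parts against ${\rm curl}$ and the scalar triple product, with the paper running the computation in the reverse direction from yours. Your version is in fact more explicit about the geometric bookkeeping (Cartan's formula, the role of ${\rm div}\,\bD=0$, and the dual $(\bA,-\bD)$ realization), while the paper compresses the same steps into a single display and records the diamond notation $P=A\diamond D$ at the end.
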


\begin{proof}
Referring to the definition of cotangent lift momentum map in \eqref{CLmomap-def}, we compute
\begin{align}
\begin{split}
\big\langle P,\xi\big\rangle_{\mathfrak{X}}\, &= \int P ({\xi}) 
:= \int \bsxi\cdot\bP\,d^3x
\\&= \int \bB\cdot \bsxi\times \bD\,d^3x
= \int \bA\cdot {\rm curl}(\bsxi\times \bD)\,d^3x
\\&= -\int \bA\cdot d\bx\wedge \mathcal{L}_\xi  (\bD\cdot d\bS)
\\&= \big\langle\big\langle A\,,\,-\mathcal{L}_\xi D \big\rangle\big\rangle
\\&=: \big\langle A\diamond D,\xi\big\rangle_{\mathfrak{X}}
\,,
\end{split}
\label{P-xi-pairing}
\end{align}
where we denote $\xi:=\bsxi\cdot\nabla$, $A=\bA\cdot d\bx$ and $D=\bD\cdot d\bS$ in $\mathbb{R}^3$ coordinates. In the last line of \eqref{P-xi-pairing}, the diamond $(\,\diamond\,)$ operation is defined; see, e.g., \cite{HoMaRa1998}. We also take homogeneous spatial boundary conditions, whenever we integrate by parts.
\end{proof}

\begin{remark}
Although cotangent lift momentum maps are known to be equivariant \cite{MaRa1994}, we may check directly that the momentum map $T^*\mathcal D \to \mathfrak{X}^*$ is infinitesimally equivariant by showing that the map is Poisson for the proper Lie-Poisson bracket. To do this, we use the canonical Poisson brackets in \eqref{BI-eqns-PB-A} for pairs $(\bD,\bA)$ and apply the chain rule to compute the Poisson brackets  
$\big\{P_i(\mathbf{x}),P_j(\mathbf{y})\big\}$ for the $\mathbb{R}^3$ components of the Poynting vector $\bP$. After a direct calculation by change of variables, we find
\begin{eqnarray}\label{momentum-map-bracket1}
\big\{P_i(\mathbf{x})\,,\,P_j(\mathbf{y})\big\}
=
-\,
\Big(
P_j(\mathbf{x})\frac{\partial}{\partial x^i}
+
\frac{\partial}{\partial x\,^j}\,P_i(\mathbf{x})\Big)
\delta(\mathbf{x}-\mathbf{y})
\,,\label{LP-bracket}
\end{eqnarray}
in which the right hand side is the Hamilton operator for the Lie-Poisson bracket on the space of 1-form densities. This calculation proves directly that the map $(\bD,\bA)\in T^*\mathcal{D} \to P = \bP\cdot d\bx \otimes d^3x\in\mathfrak{X}^*$ is an infinitesimally equivariant momentum map. Importantly for our fluid-fields analogy, the Lie-Poisson bracket \eqref{LP-bracket} for the electromagnetic field momentum has the same form as the Lie-Poisson bracket for fluid momentum and both of these are cotangent lift momentum maps of the same type, see \cite{Holm1987,Holm2015, HoMaRa1998} and references therein. In fact, we may now write the local conservation law for Poynting vector in \eqref{cH+bP-cons} more geometrically as 
\begin{align}
\partial_t P + \big\{P\,,\,\mathbb{H}\big\}  
= \partial_t P + \pounds_{\delta \mathbb{H}/\delta P}P  
= - \frac{\de \mathbb{H}}{\de B}\diamond B 
- \frac{\de \mathbb{H}}{\de D}\diamond D 
\,,\label{LP-form}
\end{align}
where $\pounds_{\delta \mathbb{H}/\delta P}P$ is the Lie derivative of the 1-form density $P\in\mathfrak{X}^*$ by the vector field $v=\de \mathbb{H}/\delta P$; the fluxes $B=\bB\cdot d\bS$ and $D=\bD\cdot d\bS$ are 2-forms; $\de H/\de B$ and  $\de H/\de D$ are 1-forms; the diamond $(\diamond)$ operator is defined in \eqref{P-xi-pairing}; $\mathbb{H} $ is the Born-Infeld Hamiltonian in \eqref{BI-Ham} now written in terms of vectors $(\bB,\bD,\bP)$, as
\begin{align}
\bbH (\bB,\bD,\bP) = \int \cH(\bB,\bD,\bP) \,d^3x
\quad\hbox{where}\quad
\cH(\bB,\bD,\bP) = \sqrt{1 + |\bD|^2 + |\bB|^2 + |\bP|^2} 
\,,
\label{BI-Ham-P}
\end{align}
and one treats the variational derivatives in $\bB$,  $\bD$ and $\bP$ appearing in \eqref{LP-form} as being independent; that is, ${\delta \bbH }/{\delta \bB}=(\bB/\cH)$, ${\delta \bbH }/{\delta \bD}=(\bD/\cH)$, and ${\delta \bbH }/{\delta \bP}=(\bP/\cH)=\bv$. 
\end{remark}

\subsection{Stochastic Born-Infeld equations}

Following \cite{Bi1981}, one may introduce stochasticity into the deterministic Born-Infeld equations in their Hamiltonian form \eqref{BI-eqns-Ham} by adding a stochastic term to the Hamiltonian density and using the same Poisson bracket structure as in \eqref{BI-eqns-Ham} for the deterministic case. This is done via the replacements
\begin{align}
\partial_t \to \dd 
\quad\hbox{and}\quad
\cH(\bD,\bB) \to \cH(\bD,\bB)\,dt + \sum_i h_i(\bD,\bB)\circ dW^i_t
\,,
\label{cH+bP-stoch}
\end{align}
where $\dd$ denotes stochastic time derivative, and $\cH(\bD,\bB)$ is the Hamiltonian for the deterministic equations in \eqref{BI-Ham}. The stochastic term, $\sum_i h_i(\bD,\bB)\circ dW^i_t$, in the perturbed Hamiltonian in formula \eqref{cH+bP-stoch} comprises cylindrical Stratonovich noise, written as a sum over $N$ Brownian motions $dW^i_t$, $i=1,2,\dots,N$, each interacting with the dynamical variables $(\bD,\bB)$ via its own Hamiltonian amplitude, $h_i(\bD,\bB)$, \cite{Bi1981}. 

The wide range of potential choices for the stochastic Hamiltonians $h_i(\bD,\bB)$ in \eqref{cH+bP-stoch} may be narrowed considerably by interpreting them as Hamiltonian densities that couple the noise to the drift term based on the deterministic Hamiltonian density. Specifically, we shall choose to include the Stratonovich noise in \eqref{cH+bP-stoch} by coupling it with the momentum density, as done in \cite{Holm2015} for stochastic fluid dynamics. With this choice, written explicitly below in \eqref{h-cyl-noise}, one interprets the motion induced by the total Hamiltonian density in \eqref{cH+bP-stoch} as the sum of the drift part of the stochastic motion, as governed by the deterministic Hamiltonian, plus a stochastic Hamiltonian perturbation, generated by a process of state-dependent random displacements. In the Born-Infeld case, the Poynting vector $\bP=\bD \times \bB$ is the momentum density, according to \eqref{cH+bP-cons}. Because $\bD$ and $\bB$  lie in the polarisation plane normal to the direction of propagation for electro-magnetism, we know that the Poynting vector $\bP$ lies along the direction of propagation. To proceed, we choose the stochastic part of the Born-Infeld Hamiltonian density in \eqref{cH+bP-stoch} to be
\begin{align}
h_i(\bD,\bB) = \bsxi_i(\mathbf{x}) \cdot \bD \times \bB
\,,\label{h-cyl-noise}
\end{align}
for a set of prescribed divergence-free vector fields $\bsxi_i(\mathbf{x})$ which are meant to represent the stationary spatial correlations of the cylindrical noise. 
\begin{remark}
Notice that the inclusion of the stochastic term \eqref{h-cyl-noise} into the Hamiltonian density in \eqref{cH+bP-stoch} has introduced explicit dependence on time and space coordinates in the total Hamiltonian. Consequently, the conservation laws in \eqref{cH+bP-cons}  for the deterministic Born-Infeld total energy $\int\cH \,d^3x$ and total momentum $\int\bP \,d^3x$ may no longer apply in the stochastic case; see, however, Remark \ref{cons-erg-mom}. Moreover, the stochastic Born-Infeld equations in \eqref{BI-eqns-stoch} are no longer Lorentz invariant, although this was to be expected, because loss of explicit Lorentz invariance arises, in general, when casting Lorentz invariant dynamics into the Hamiltonian formalism. 
 \end{remark}
\begin{remark} 
The stochastic part of the Born-Infeld Hamiltonian density in \eqref{h-cyl-noise} is the integrand in the first line of \eqref{P-xi-pairing}. Thus, the Stratonovich noise in \eqref{cH+bP-stoch} has been coupled to the deterministic Born-Infeld field theory through the momentum map in \eqref{P-xi-pairing} corresponding to the Poynting vector, $P=A\diamond D = \bD \times \bB\cdot d\bx \otimes d^3x$, which is a 1-form density. 
 \end{remark}

\noindent
\textbf{Stratonovich form.} Upon performing the indicated operations in \eqref{BI-eqns-Ham}, \eqref{cH+bP-stoch} and \eqref{h-cyl-noise}, one finds the following set of stochastic Born-Infeld equations. 
\begin{align}
\begin{split}
\begin{bmatrix}
\dd \bD
\\ \\
\dd \bB
\end{bmatrix}
=
\begin{bmatrix}
{\rm curl}\,\bH \,d t - \sum_i [\,\xi_i,\bD\,]\circ dW^i_t
\\ \\
-\, {\rm curl}\,\bE \,d t - \sum_i [\,\xi_i,\bB\,]\circ dW^i_t
\end{bmatrix}
\quad\hbox{with}\quad
{\rm div}\bD = 0 = {\rm div}\bB
\,,
\end{split}
\label{BI-eqns-stoch}
\end{align}
where $[\,\cdot\,,\,\cdot\,]$ denotes the Lie bracket of divergence-free vector fields, defined by
\begin{align}
[\,\xi_i\,,\bD\,] = -\,{\rm curl}(\bsxi_i\times \bD)
= (\bsxi_i \cdot \nabla)\bD - (\bD \cdot \nabla)\bsxi_i
 =:  \cL_{\xi_i}\bD
\,,
\label{LieBrkt-def}
\end{align}
with Lie derivative $\cL_{\xi_i}$ with respect to $\xi_i$, and the vectors $\bE$ and $\bH$ in \eqref{BI-eqns-stoch} are defined via the variational derivatives in \eqref{var-deriv-Ham}. Equations \eqref{BI-eqns-stoch} represent the stochastic Born-Infeld versions of the displacement current and the flux rule for Maxwell's equations. Upon rewriting equations \eqref{BI-eqns-stoch} as 
\begin{align}
\begin{split}
\begin{bmatrix}
\dd \bD
+ \sum_i \cL_{\xi_i}\bD \circ dW^i_t
\\ \\
\dd \bB
+ \sum_i \cL_{\xi_i}\bB \circ dW^i_t
\end{bmatrix}
=
\begin{bmatrix}
{\rm curl}\,\bH \,d t 
\\ \\
-\, {\rm curl}\,\bE \,d t 
\end{bmatrix}
\quad\hbox{with}\quad
{\rm div}\bD = 0 = {\rm div}\bB
\,,
\end{split}
\label{BI-eqns-stoch-Lie}
\end{align}
one sees on the left side of \eqref{BI-eqns-stoch-Lie} that the present Hamiltonian approach to adding stochasticity to the Born-Infeld field equations has introduced Lie transport by Stratonovich noise into the stochastic time derivative, exactly as it did in fluid dynamics, treated in \cite{Holm2015}. This happened because we coupled the noise to the momentum map -- the Poynting co-vector density -- which generates infinitesimal spatial translations under the Lie-Poisson bracket in \eqref{LP-bracket}. For more discussion of this point, see Appendix \ref{sec-var-form}. 

To pay a bit more attention to the differential geometry of this problem, we rewrite the stochastic Born-Infeld equations \eqref{BI-eqns-stoch} as evolutions of 2-forms,
\begin{align}
\begin{split}
\Big( \dd + \sum_i \cL_{\xi_i}(\,\cdot\,) \circ dW^i_t \Big)
\begin{bmatrix}
 \bD\cdot d\bS
\\ 
\bB\cdot d\bS
\end{bmatrix}
=
{\rm d}
\begin{bmatrix}
\bH\cdot d\bx \,d t 
\\ \\
-\, \bE\cdot d\bx \,d t 
\end{bmatrix}
\quad\hbox{with}\quad
{\rm div}\bD = 0 = {\rm div}\bB
\,,
\end{split}
\label{BI-eqns-stoch-geom-int}
\end{align}
where ${\rm d}$ is the spatial differential (exterior derivative).

In integral form, this is equivalent to
\begin{align}
\begin{split}
\int_S
\Big( \dd + \sum_i \cL_{\xi_i}(\,\cdot\,) \circ dW^i_t \Big)
\begin{bmatrix}
 \bD\cdot d\bS
\\ 
\bB\cdot d\bS
\end{bmatrix}
=
\oint_{\partial S}
\begin{bmatrix}
\bH\cdot d\bx \,d t 
\\ \\
-\, \bE\cdot d\bx \,d t 
\end{bmatrix}
\quad\hbox{with}\quad
{\rm div}\bD = 0 = {\rm div}\bB
\,,
\end{split}
\label{BI-eqns-stoch-geom}
\end{align}
in which one sees that the fluxes of $\bD$ and $\bB$ are frozen into the  displacements generated by the Lie derivative with respect to the stochastic vector field $\sum_i \cL_{\xi_i}(\,\cdot\,) \circ dW^i_t $.
To reiterate, the effect of introducing Stratonovich noise coupled to the momentum map while preserving the Poisson structure of a Hamiltonian system is to introduce stochastic Lie transport into the evolution operator, thereby introducing random time-dependent uncertainty into the differential surface elements, while preserving the remainder of the unperturbed equations, precisely as found earlier for stochastic fluid equations in \cite{Holm2015}.

\medskip

\noindent
\textbf{It\^o form.} 
When dealing with cylindrical noise, the spatial coordinates are treated merely as parameters. That is, one may regard the cylindrical noise process as a finite dimensional stochastic process parametrized by $\bx$ (the spatial coordinates).  In this regard, the Stratonovich equation makes analytical sense pointwise, for each fixed $\bx$.  Once this is agreed, then the transformation to It\^o by the standard method also makes sense pointwise in space. For more details, see \cite{Pa2007,Sc1988}.

The It\^o form of the stochastic Born-Infeld equations in \eqref{BI-eqns-stoch} is given by 
\begin{align}
\begin{split}
\begin{bmatrix}
\dd \bD
+ \sum_i  \cL_{\xi_i}\bD \,dW^i_t 
- \frac12 \sum_i  \cL_{\xi_i}(\cL_{\xi_i}\bD)\,dt
\\ \\
\dd \bB 
+ \sum_i  \cL_{\xi_i}\bB \,dW^i_t
- \frac12 \sum_i  \cL_{\xi_i}(\cL_{\xi_i}\bB)\,dt
\end{bmatrix}
=
\begin{bmatrix}
{\rm curl}\,\bH \,d t 
\\ \\
-\, {\rm curl}\,\bE \,d t 
\end{bmatrix}
\quad\hbox{with}\quad
{\rm div}\bD = 0 = {\rm div}\bB
\,.
\end{split}
\label{BI-eqns-stoch-Ito}
\end{align}
In It\^o form, the stochastic Born-Infeld equations contain both Lie transport by It\^o noise and an elliptic operator (double Lie derivative) arising from the It\^o contraction term. 

\begin{remark}[Application to the stochastic Maxwell equations in the weak-field limit] The reduction of the stochastic Born-Infeld equations in \eqref{BI-eqns-stoch} and \eqref{BI-eqns-stoch-Ito} to the linear stochastic Maxwell equations occurs in the weak-field limit via replacing the Born-Infeld Hamiltonian density $\cH$ in \eqref{BI-Ham} by the Maxwell energy density 
\[
\cH \to \frac12\left(|\bD|^2 + |\bB|^2\right)\,.
\]
In this weak-field limit for the Hamiltonian density, the variables defined by variational derivatives in \eqref{BI-eqns-stoch-Ito} then reduce as $\bE\to\bD$ and $\bH\to\bB$. In this limit, the nonlinear stochastic Born-Infeld equations  in \eqref{BI-eqns-stoch-Ito} reduce to the linear stochastic Maxwell equations, a variant of which has been investigated, e.g., in \cite{HoStYa2010} for its approximate controllability. See, e.g., also \cite{HoJiZh2014} for a numerical study of a simplified version of these stochastic Maxwell equations in two spatial dimensions.\medskip

Taking the expectation of the It\^o system \eqref{BI-eqns-stoch-Ito} in the Maxwell weak-field limit $\bE\to\bD$ and $\bH\to\bB$ implies that the expected values for the electric field $\langle\bD\rangle:=\mathbb{E}[\bD]$ and the magnetic field $\langle\bB\rangle:=\mathbb{E}[\bB]$ evolve according to
\begin{align}
\begin{split}
\begin{bmatrix}
\partial_t \langle\bD\rangle
\\ \\
\partial_t \langle\bB\rangle
\end{bmatrix}
=
\begin{bmatrix}
{\rm curl}\,\langle\bB\rangle 
+ \frac12 \sum_i  \cL_{\xi_i}(\cL_{\xi_i}\langle\bD\rangle)
\\ \\
-\, {\rm curl}\,\langle\bD\rangle  
+ \frac12 \sum_i  \cL_{\xi_i}(\cL_{\xi_i}\langle\bB\rangle)
\end{bmatrix}
\quad\hbox{with}\quad
{\rm div}\langle\bD\rangle = 0 = {\rm div}\langle\bB\rangle
\,.
\end{split}
\label{Maxwell-eqns-stoch-expected}
\end{align}
In the case that the $\bsxi_i$ are the constant coordinate basis vectors in $\mathbb{R}^3$, one replaces $\cL_{\xi_i}\langle\bD\rangle \to {\bsxi_i}\cdot\nabla\langle\bD\rangle$. Hence, in this case, the sums in \eqref{Maxwell-eqns-stoch-expected} over double Lie derivatives simply reduce to Laplacians.

\end{remark}

\section{Conclusion}\label{sec-conclude-sum}

In conclusion, we have seen that the association of the Born-Infeld equations \eqref{BI-eqns-Ham} with their augmented hydrodynamic counterparts in \eqref{cH+bP-geom}  has informed us via the method introduced for hydrodynamics in \cite{Holm2015} how to add noise to the evolution of the Born-Infeld displacement flux and magnetic flux; so as to preserve its energy $\cH$ in \eqref{cH+bP-geom} as well as implementing the addition of noise geometrically as a process of moving into a stochastic frame of motion.  Namely, the flux fields $\bD$ and $\bB$ evolve under a Stratonovich stochastic \emph{flow}, being Lie-transported by a sum of stochastic vector fields $\xi_i(\bx)$ carrying spatial correlation information, via the Lie-derivative operation $\sum_i \cL_{\xi_i \circ dW^i_t}(\,\cdot\,)$. 

Section \ref{sec-fluids} concluded via a variational approach to stochastic fluid dynamics that  noise enhanced Lie transport of the fluid vorticity in equation \eqref{StochVF}. For fluids, this conclusion seemed intuitive. On the other hand, it  seemed less intuitive to find in section \ref{sec-ABI} that noise would induce Lie transport of the electromagnetic fields $\bD$ and $\bB$  as 2-forms along random paths generated by the vector fields associated with the spatial correlations of the stochasticity. The correspondence between vorticity 2-forms and electromagnetic flux 2-forms has attracted attention at least since Maxwell and W. Thompson. It turned out that the hydrodynamic analogue equations derived in \eqref{cH+bP-geom} provided the key to understanding why the introduction of noise into the Born-Infeld equations should appear as stochastic Lie transport. The hydrodynamic analogue for the Born-Infeld equations is summarised in the following theorem. 

\begin{theorem}[Hydrodynamic analogy for the stochastic Born-Infeld equations]\label{Stoch-Hydro-Analog}
The stochastic Born-Infeld equations in \eqref{BI-eqns-stoch-Lie} preserve both the conservative form for the evolution of $\cH$ and the Kelvin circulation theorem for $\bv = \bP/\cH$, as
\begin{align}
\begin{split}
&{\dd} \cH + {\rm div}\,\left(\cH \bsvt \right) = 0
\,,\\& 
\dd\oint_{c(\tilde{v})} (\bv\cdot d\bx ) 
+ \oint_{c(\tilde{v})} \big(\bsgamma\times {\rm curl}\bsgamma 
+  \bsbeta\times {\rm curl}\bsbeta \big) \cdot d\bx
= 0
\,,
\end{split}
\label{cH+bP-geom-thm}
\end{align}
for a closed loop $c(\tilde{v})$ moving with the stochastically augmented velocity $\bsvt :=\bv dt + \sum_i \bsxi_i \circ dW^i_t$. 
\end{theorem}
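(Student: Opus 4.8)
The plan is to derive both claimed relations in \eqref{cH+bP-geom-thm} directly from the stochastic Born-Infeld equations in the form \eqref{BI-eqns-stoch-Lie}, by repeating the deterministic computation that produced \eqref{cH+bP-cons} and \eqref{BI-KelThm}, but with $\partial_t$ replaced by the stochastic time derivative $\dd$ and with the extra stochastic Lie-transport terms $\sum_i \cL_{\xi_i}\bD \circ dW^i_t$ and $\sum_i \cL_{\xi_i}\bB \circ dW^i_t$ carried along. The key structural fact I would invoke is that because the noise is introduced as a canonical transformation generated by the momentum map $P = A\diamond D = \bP\cdot d\bx\otimes d^3x$ (the content of the Theorem of section~\ref{sec-ABI} and the surrounding Remarks), the Stratonovich stochastic process is, at each fixed $\bx$, an ordinary flow; hence the ordinary Leibniz rule and chain rule of differential calculus apply to $\dd$ in Stratonovich form, and the algebraic identities behind \eqref{cH+bP-cons} go through verbatim. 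In particular, Stratonovich calculus lets me treat $\dd$ formally like $d$, so no It\^o correction terms appear in this computation.

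First I would establish the energy law. Writing $\cH = \sqrt{1+|\bD|^2+|\bB|^2+|\bP|^2}$ and using the Stratonovich chain rule, I compute $\dd\cH = (\bE\cdot\dd\bD + \bH\cdot\dd\bB)$ with $\bE,\bH$ as in \eqref{var-deriv-Ham} (note $\delta\cH/\delta\bD$ paired with $\dd\bD$ and similarly for $\bB$, using that $\bP=\bD\times\bB$ so the $|\bP|^2$ contribution reorganises into exactly the $\bE,\bH$ of \eqref{var-deriv-Ham}). Substituting $\dd\bD = {\rm curl}\,\bH\,dt - \sum_i\cL_{\xi_i}\bD\circ dW^i_t$ and $\dd\bB = -{\rm curl}\,\bE\,dt - \sum_i\cL_{\xi_i}\bB\circ dW^i_t$, the deterministic part $\bE\cdot{\rm curl}\,\bH - \bH\cdot{\rm curl}\,\bE = -{\rm div}(\bE\times\bH) = -{\rm div}\,\bP$ reproduces the first line of \eqref{cH+bP-cons}. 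For the noise part, I use that $\xi_i$ is divergence-free, so $\cL_{\xi_i}$ acting on the 2-forms $\bD\cdot d\bS,\bB\cdot d\bS$ and on the scalar density $\cH\,d^3x$ all commute with $\dd$-differentiation in the expected Leibniz way, giving $\bE\cdot\cL_{\xi_i}\bD + \bH\cdot\cL_{\xi_i}\bB = \xi_i\cdot\nabla\cH + \cH\,{\rm div}\,\xi_i = {\rm div}(\cH\bsxi_i)$ (the density version of the Lie derivative), since $\cH$ depends on $(\bD,\bB)$ only through the pointwise contractions that $\cL_{\xi_i}$ moves covariantly. Assembling terms yields $\dd\cH + {\rm div}\big(\cH\,\bv\,dt + \cH\sum_i\bsxi_i\circ dW^i_t\big) = \dd\cH + {\rm div}(\cH\bsvt) = 0$, which is the first line of \eqref{cH+bP-geom-thm}.

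Next I would treat the circulation theorem. Starting from the momentum-map form \eqref{LP-form}, the stochastic version (obtained by the same Hamiltonian-with-noise construction) reads $\dd P + \pounds_{\bsvt}P = -\frac{\delta\mathbb H}{\delta B}\diamond B - \frac{\delta\mathbb H}{\delta D}\diamond D$, where now the transport vector field is the stochastic $\bsvt = \bv\,dt + \sum_i\bsxi_i\circ dW^i_t$; the right side is the same diamond expression as in the deterministic case and hence, as in the passage from \eqref{cH+bP-cons} to \eqref{cH+bP-geom} and \eqref{BI-KelThm}, equals a sum of Lie derivatives of $\bsgamma\cdot d\bx,\bsbeta\cdot d\bx$ plus an exact differential. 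Dividing through by $\cH$ to pass from the 1-form density $P$ to the 1-form $\bv\cdot d\bx$ (legitimate because $\cH$ obeys the transport law just proved, so $1/\cH$ times a transported density is again transported), integrating around a loop $c(\tilde v)$ advected by $\bsvt$, and using the Kelvin-type identity $\dd\oint_{c(\tilde v)}\alpha = \oint_{c(\tilde v)}(\dd + \pounds_{\bsvt})\alpha$ together with $\oint d(\text{scalar}) = 0$, kills the exact differential and leaves exactly the second line of \eqref{cH+bP-geom-thm}.

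The main obstacle I anticipate is bookkeeping the density-versus-form distinction carefully in the noise terms: $\cH$ is a scalar \emph{density} and $\bv\cdot d\bx$ a 1-form while $P$ is a 1-form \emph{density}, so $\cL_{\xi_i}$ acts with different weight terms in each case, and one must verify that the $\xi_i$ being divergence-free is exactly what makes the weight terms consistent so that dividing the momentum density by $\cH$ commutes with stochastic transport. A secondary point to get right is justifying that Stratonovich $\dd$ genuinely obeys the ordinary product and chain rules used above — this is precisely the ``pointwise in $\bx$'' remark about cylindrical noise in the It\^o-form discussion, which I would cite rather than reprove. Once those two points are in place the computation is a line-by-line echo of the deterministic Remarks following \eqref{cH+bP-cons}.
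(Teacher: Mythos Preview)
Your proposal is correct and follows essentially the same route as the paper's proof: both start from the stochastic field equations \eqref{BI-eqns-stoch-Lie}, establish that the noise contributions to $\dd\cH$ and $\dd\bP$ organize into Lie-derivative terms $\cL_{\xi_i}\cH$ and $\cL_{\xi_i}\bP$ (the paper's equations \eqref{cH+bP-stoch1}--\eqref{cH+bP-stoch2}, your chain-rule computation), then pass to the geometric form $(\dd+\cL_{\tilde v})(\cH\,d^3x)=0$ and $(\dd+\cL_{\tilde v})(\bv\cdot d\bx)=\ldots$ and integrate the latter around a transported loop. Your invocation of the momentum-map form \eqref{LP-form} for the $\bP$-equation is a mild shortcut compared to the paper's coordinate computation, and the ``main obstacle'' you flag---the density-versus-form bookkeeping in the noise terms---is exactly the content of the paper's ``short calculation'' leading to \eqref{cH+bP-stoch1}.
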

\begin{remark}\label{cons-erg-mom}
The first equation in \eqref{cH+bP-geom-thm} of Theorem \ref{Stoch-Hydro-Analog} implies that the Born-Infeld energy Hamiltonian $\bbH$ in \eqref{BI-Ham} for the deterministic augmented Born-Infeld equations \eqref{BIE-set} \emph{remains conserved} after introducing the stochastic Hamiltonian density in \eqref{h-cyl-noise} which pairs the noise with the Poynting vector. 

The second equation in \eqref{cH+bP-geom-thm} of Theorem \ref{Stoch-Hydro-Analog} shows that the hydrodynamic analogy of the ABI equations via the Kelvin circulation theorem for the deterministic case in \eqref{BI-KelThm} persists for the stochasticity introduced here. 
\end{remark}
\begin{proof}
After a short calculation, equations \eqref{BI-eqns-stoch-Lie} imply
\begin{align}
\begin{split}
&{\dd} \cH + \sum_i \cL_{\xi_i \circ dW^i_t}\cH 
+ {\rm div}\,\bP\,dt = 0
\,,\\& 
{\dd} \bP + \sum_i \cL_{\xi_i \circ dW^i_t}\bP
+ {\rm div}\,
\left( \frac{\bP\otimes\bP}{\cH} - \frac{\bD\otimes\bD}{\cH} - \frac{\bB\otimes\bB}{\cH}
\right)dt
= \nabla \left( \frac{1}{\cH}\right)dt
\,.
\end{split}
\label{cH+bP-stoch1}
\end{align}
Here, $\cL_{\xi \circ dW_t}\cH $ and $\cL_{\xi \circ dW_t}\bP$ denote, respectively,  the coefficients in the following Lie derivatives,
\begin{align}
\begin{split}
\cL_{\xi \circ dW_t}(\cH \, d^3x)\,,
&= 
\left({\rm div}(\cH \bsxi)\circ dW_t\right)\, d^3x
\\
\cL_{\xi \circ dW_t}(\bP\cdot d\bx\otimes d^3x)
&=
\left(\big(\partial_j(\xi^j P_k) + P_j \partial_k \xi^j \big)\circ dW_t\right)dx^k\otimes d^3x
\,.
\end{split}
\label{cH+bP-stoch2}
\end{align}
Since the Lie derivative of a scalar density is a divergence, the first equation in \eqref{cH+bP-stoch2} implies that conservation of the energy $\bbH (\bB,\bD,\bP) = \int \cH(\bB,\bD,\bP) \,d^3x$ defined in equation \eqref{BI-Ham-P} for the deterministic ABI persists in the stochastic case, provided the normal components $\mathbf{\hat{n}}\cdot\bsxi_i$ of the spatially dependent eigenvectors $\bsxi_i(\mathbf{x})$ do not contribute on the boundary of the domain of flow. 

After substituting $\bv = \bP/\cH$ and using \eqref{cH+bP-stoch2} to rearrange equations \eqref{cH+bP-stoch1}, we find
\begin{align}
\begin{split}
\Big(\dd + \cL_{vdt + \sum_i \xi_i \circ dW^i_t}\Big)(\cH\,d^3x) &= 0
\,,\\
\Big(\dd + \cL_{vdt + \sum_i \xi_i \circ dW^i_t}\Big)(\bv\cdot d\bx ) 
- \cL_\gamma (\bsgamma\cdot d\bx)
- \cL_\beta (\bsbeta\cdot d\bx)
&=
\frac12 d\left( \cH^{-2}  + |\bv|^2 - |\bsgamma|^2   - |\bsbeta|^2  \right)
\,.\end{split}
\label{cH+bP-stoch3}
\end{align} 
Upon introducing the vector field $\bsvt :=\bv dt + \sum_i \bsxi_i \circ dW^i_t$, the stochastic transport terms in these equations are expressed more compactly, as
\begin{align}
\begin{split}
(\dd + \cL_{\tilde{v}})(\cH\,d^3x) &= 0
\,,\\
(\dd + \cL_{\tilde{v}})(\bv\cdot d\bx ) 
- \cL_\gamma (\bsgamma\cdot d\bx)
- \cL_\beta (\bsbeta\cdot d\bx)
&=
\frac12 d\left( \cH^{-2}  + |\bv|^2 - |\bsgamma|^2   - |\bsbeta|^2  \right)
\,.\end{split}
\label{cH+bP-stoch4}
\end{align}
Returning from the Lie derivative notation to the coordinate form in vector notation as in equation \eqref{Lie-deriv-1form} and integrating the second equation in \eqref{cH+bP-stoch4} around a closed loop $c(\tilde{v})$ moving with the velocity $\bsvt(\bx,t)$ produces the equations \eqref{cH+bP-geom-thm} in the statement of the theorem. 

\end{proof}

\begin{remark}
Inserting the second equation in \eqref{cH+bP-stoch2} into the second equation in \eqref{cH+bP-stoch1} implies that conservation of total momentum $\int\bP \,d^3x$ obtained in equation \eqref{cH+bP-cons} for the deterministic ABI \emph{does not persist} in the stochastic case, unless $\partial_k \xi^j =0$, i.e., unless the amplitude of the noise is constant. 
\end{remark}

\begin{remark}[Fluid interpretation of stochastic augmented Born-Infeld equations \eqref{BI-eqns-stoch-Lie} \& \eqref{cH+bP-stoch4}] 

The stochastic ABI equations may be expressed in the same Lie-Poisson Hamiltonian form as for the deterministic case, which is introduced Appendix \ref{sec-var-form}, as
\begin{align}
\begin{split}
\dd
\begin{bmatrix}
P_i  \\ \bB \\ \bD
\end{bmatrix}
&=
\left\{
\begin{bmatrix}
P_i  \\ \bB \\ \bD
\end{bmatrix},
\widetilde{\bbH}
\right\}
\quad\hbox{which, upon substituting ${\rm div}\bB=0$ and ${\rm div}\bD=0$, becomes,}
\\&= - 
 \begin{bmatrix}
  ( P_j\partial_i + \partial_j P_i)\Box &   \bB\times{\rm curl}  \Box & \bD\times{\rm curl}\Box  
   \\
   -\,{\rm curl} (\,{\Box} \times \bB) & 0 & {\rm curl}{\Box}
   \\
   -\,{\rm curl} (\,{\Box} \times \bD)  & -\,{\rm curl}{\Box} & 0
   \end{bmatrix}
\begin{bmatrix}
{\delta \widetilde{\bbH}/\delta P_j} \\
{\de \widetilde{\bbH}}/{\de \bB} \\
{\de \widetilde{\bbH}}/{\de \bD} 
\end{bmatrix}.
\end{split}
\label{EM-LPB1}
\end{align}
Here, the stochastic Hamiltonian $\widetilde{\mathbb{H}}$ is given by
\begin{align}
\widetilde{\bbH}  (\bB,\bD,\bP) = \int \widetilde{\cH}(\bB,\bD,\bP) \,d^3x
\,,
\label{BI-Ham-tilde}
\end{align}
where $\widetilde{\cH}(\bB,\bD,\bP)$ in \eqref{BI-Ham-tilde} is defined by 
\begin{align}
\widetilde{\cH}(\bB,\bD,\bP) 
= \sqrt{1 + |\bD|^2 + |\bB|^2 + |\bP|^2}\,dt 
+ \sum_i \bP(\bx,t) \cdot \bsxi_i(\mathbf{x})\circ dW^i_t
\,.
\label{BI-Ham-P1}
\end{align}
The variational derivative ${\delta \widetilde{\bbH} }/{\delta \bP}$ of the stochastic Hamiltonian $\widetilde{\mathbb{H}}$ in \eqref{BI-Ham-tilde} recovers the stochastic velocity introduced in Theorem \ref{Stoch-Hydro-Analog},
\begin{align}
\frac{\delta \widetilde{\bbH} }{\delta \bP} = 
\bsvt :=(\bP/\cH) dt + \sum_i \bsxi_i(\bx) \circ dW^i_t
\,,
\label{BI-Ham-P2}
\end{align}
where again one takes variational derivatives in $(\bP,\bB,\bD)$ independently, as in \eqref{LP-form}. In particular, ${\delta \widetilde{\bbH} }/{\delta \bB}=\bB/\cH$ and ${\delta \widetilde{\bbH} }/{\delta \bD}=\bD/\cH$.
\end{remark}

{\bf The role of the momentum map.} So, why did the introduction of noise in both fluid dynamics and the Born-Infeld equations result in the same effect of introducing stochastic Lie transport into the motion equations for both particles and fields? The answer lies in the momentum map which relates the Hamiltonian structures of the two  theories. Namely, although their Poisson brackets are different, the two theories are related by the hydrodynamics analogy in \eqref{cH+bP-geom} comprising the momentum map given by the definition of the Poynting vector $\bP=\bD\times\bB$, regarded as a momentum 1-form density. 

Indeed, as we show in Appendix \ref{sec-var-form}, the Born-Infeld equations \eqref{BIE-set} for the $\bB$ and $\bD$ fields, augmented by the local conservation laws\eqref{cH+bP-cons} for the Poynting momentum $\bP$, together comprise Hamiltonian  dynamics on a Poisson manifold $\mathfrak{X}^*(\mathbb{R}^3)\times T^*\mathcal{C}^\infty(\mathbb{R}^3)$, whose Poisson structure is given by the sum of the canonical Poisson bracket for the electromagnetic fields, plus a Lie--Poisson bracket which is dual in the sense of $L^2$ pairing to the semidirect-product Lie algebra $\mathfrak{X}\,\circledS\,(\Lambda^1\otimes\Lambda^1)$, with dual coordinates $P\in\mathfrak{X}^*$, $B\in\Lambda^2$ and $D\in\Lambda^2$. This sum of a canonical Poisson bracket and a semidirect-product Lie-Poisson bracket derives from the definitions of the magnetic field flux $B:=dA$ and the cotangent lift momentum map $P=D\diamond A$ in \eqref{P-xi-pairing}. We refer to such augmented Poisson structures as KM brackets, after \cite{KrMa1987}.  The KM bracket for ABI derived in Appendix \ref{sec-var-form} reveals why the introduction of stochasticity by adding  to the deterministic Hamiltonian the stochastic term $\langle P,\xi(x) \rangle \circ dW_t$ in equation \eqref{LieBrkt-def} simply introduces a Lie derivative stochastic transport term in the resulting SPDE. In particular, introducing stochasticity in the transport of the Poynting vector momentum density in the ABI corresponds to introducing stochastic transport in both the displacement current and magnetic induction rate for the original Born-Enfeld field equations. 

{\bf Applications.} 
For both fluids and electromagnetic fields, the modified Hamiltonian which added the variational noise contribution was constructed by pairing eigenvector fields ostensibly describing the spatial correlations of the stochasticity in the data as cylindrical noise, with the momentum 1-form density variable for either the fluid or the fields. The infinitesimal Hamiltonian flow generated by the $L^2$ pairing of the momentum 1-form density with the sum of Stratonovich cylindrical noise terms with eigenvector fields derived from the spatial correlations of the data in both cases turned out to be an infinitesimal stochastic diffeomorphism. In fact, the infinitesimal map which resulted from the original Poisson structure was a sum of Lie derivatives with respect to the correlation eigenvector fields $\bsxi_i(\bx)$ for each component of the cylindrical noise. 

Thus, we conclude that uncertainty quantification for Hamiltonian systems can be based on stochastic Hamiltonian flows that are obtained from coupling the momentum map for the deterministic system with the sum over stochastic Stratonovich Brownian motions, formulated as cylindrical noise terms for each fixed spatial correlation eigenvector, as determined from the data being simulated. 

The application of the stochastic fluid dynamics treated here for quantifying uncertainty will depend crucially on determining the spatial correlations of the cylindrical noise represented by the eigenvectors $\bsxi_i(\bx)$ in Theorem \ref{Stoch-Hydro-Analog}. Extensive examples in fluid dynamics of how to obtain the spatial structure $\bsxi_i(\bx)$ of the cylindical noise and thereby quantify uncertainty in low resolution numerical simulations by comparing them to high resolution results for the same problem are given in \cite{CoCrHoShWe2018a,CoCrHoShWe2018b} for fluid dynamics simulations. These results should also be helpful examples for applying similar methods to the  Born-Infeld equations.


\subsection*{Acknowledgements}
I am grateful to Y. Brenier for illuminating discussions of the Born-Infeld model. I am grateful to D. O. Crisan, V. Putkaradze, T. S. Ratiu and C. Tronci, for encouraging and incisive remarks in the course of this work. I'd also like to thank the anonymous referees for their constructive suggestions. 
Finally, the author is also grateful to be partially supported by the European Research Council Advanced Grant 267382 FCCA and EPSRC Standard Grant EP/N023781/1. 

\appendix

\section{Variational formulations of ABI}\label{sec-var-form}

\subsection{Lagrangian formulation} 

Begin with Hamilton's principle, with action integral given by

\begin{align}
S(\bE,\bA; \bD) &:= \int_a^b  \mathcal{L}\, dt = \int_a^b \ell(\bE,\bB) dt\, + \int_a^b -\bD\cdot (\p_t\bA + \bE - \nabla\phi) + (\rho\phi - \bJ\cdot\bA) \,d^3x\,dt
\,,
\label{BI-action}
\end{align}
where $\bB={\rm curl}\bA$ is the magnetic field, $\bA(\bx,t)$ is the magnetic vector potential and $(\rho\phi - \bJ\cdot\bA)$ represents coupling of the electro magnetic fields to charged particle motion, represented by prescribed functions $\rho$ and $\bJ$, whose space and time dependence will satisfy the compatibility condition in \eqref{Max-gauge} for charge conservation, derived from gauge invariance of the Born-Infeld action $S$ in \eqref{BI-action}. Invoking Hamilton's principle and taking variations in \eqref{BI-action} yields 
\begin{align}
\begin{split}
0 = \delta S =&  \int_a^b \left\langle \frac{\de \ell}{\de \bE} - \bD\,,\,\de \bE\right\rangle
+ \left\langle {\rm curl}\frac{\de \ell}{\de \bB} +\p_t\bD - \bJ \,,\,\de \bA\right\rangle
+\langle - {\rm div} \bD + \rho\,,\, \de\phi  \rangle
\\&
-\langle \de \bD\,,\, \p_t\bA + \bE - \nabla\phi  \rangle \,dt
-\langle \bD\,,\, \de\bA   \rangle\Big|_a^b
\,,
\end{split}
\label{BI-HP1}
\end{align}
where angle brackets $\langle \, \cdot\,,\, \cdot\,  \rangle$ denote $L^2$ pairing. 
Upon recalling that $\bB={\rm curl}\bA$, stationarity $\de S=0$ of the action $S$ in \eqref{BI-action} implies the equations
\begin{align}
\begin{split}
\p_t \bB &= - \,{\rm curl}\,\bE 
\qquad
{\rm div} \bB = 0
\,,
\\
\p_t \bD &= -\,{\rm curl}\,\frac{\de \ell}{\de \bB} + \bJ
\qquad
{\rm div} \bD = \rho
\,.
\end{split}
\label{Maxwell-eqs}
\end{align}
These recover Maxwell's equations, upon defining $-\,{\rm curl}\,\frac{\de \ell}{\de \bB} =: {\rm curl}\,\bH$ and assuming a linear relation $\bH=\mu \bB$, between magnetic field $\bB$ and the magnetic induction, $\bH$. 

The fields $\bB={\rm curl}\bA$ and $\bE= - \p_t\bA + \nabla\phi $ in the Lagrangian in \eqref{BI-action} are invariant under gauge transformations given by
\begin{align}
\de \bA =  \nabla \psi \quad\hbox{and}\quad \de \phi = \p_t \psi
\quad\Longrightarrow\quad
\de \bE =0  \quad\hbox{and}\quad \de\bB=0
\,,
\label{Max-gauge}
\end{align}
for an arbitrary function $\psi(\bx,t)$.
Under these gauge transformations, the variation $\delta S$ of the action integral in \eqref{BI-HP1} will vanish after integration by parts, provided the matter fields $\rho$ and $\bJ$ satisfy the relation for  conservation of charge,
\begin{align}
\p_t\rho = {\rm div}\bJ 
\,,
 \label{Max-charge}
\end{align}
which is equivalent to preservation in time of the Gauss equation, ${\rm div} \bD = \rho$  in \eqref{Maxwell-eqs}, which is assumed to hold initially. 

\subsection{Noether momentum map} 

Now consider the Noether endpoint term  for the case that
\begin{align}
-\langle \bD\,,\, \de\bA   \rangle := -\int \bD\cdot\de\bA\,d^3x = \int \bD\cdot d\bS\wedge \pounds_\eta(\bA\cdot\,d\bx) 
\,,
 \label{Noether-endpt1}
\end{align}
where $\de\bA=-\,\pounds_\eta(\bA\cdot\,d\bx)$ denotes (minus) the Lie derivative of the 1-form $\bA\cdot\,d\bx$ by the smooth vector field $\eta=\bseta\cdot\nabla\in\mathfrak{X}(\mathbb{R}^3)$. Thus, 
\begin{align}
\begin{split}
-\int \bD\cdot\de\bA\,d^3x &= \int \bD\cdot \Big( - \bseta\times {\rm curl} \bA + \nabla (\bseta\cdot\bA)\Big)d^3x
\\
&= \int \bseta\cdot \Big( \bD \times {\rm curl} \bA -\bA {\rm div}\bD\Big)d^3x
=: \int \bseta\cdot \bP\,d^3x
\,,
\end{split}
 \label{Noether-endpt2}
\end{align}
where $\bP$ is the Poynting vector, defined as 
\begin{align}
\bP:= \bD \times \bB -\bA \,{\rm div}\bD\,.
 \label{Poynting-vec-def1}
\end{align}
Geometrically, the Poynting vector, is a 1-form density 
\begin{align}
P:= \bP \cdot d\bx \otimes d^3x = (\bD \times \bB -\bA \,{\rm div}\bD)\cdot d\bx \otimes d^3x
\,.
 \label{Poynting-vec-def2}
\end{align}
The expression \eqref{Poynting-vec-def1} for the Poynting vector reduces to the one in \eqref{P-xi-pairing} when ${\rm div}\bD=0$.

\subsection{Hamiltonian formulation} 

Legendre transforming the constrained Lagrangian $\mathcal{L}$ defined in  equation \eqref{BI-action} gives the following Hamiltonian,
\begin{align}
\begin{split}
{H}(\bD,\bA) &= \langle \bsPi\,,\,\partial_t \bA \rangle - \ell (\bE\,,\,\bB) 
+ \langle \bD\,,\,\p_t\bA + \bE - \nabla\phi \rangle
- \int (\rho\phi - \bJ\cdot\bA) \,d^3x
\\&= 
- \ell (\bE\,,\,\bB) 
+ \langle \bD\,,\, \bE - \nabla\phi \rangle
- \int (\rho\phi - \bJ\cdot\bA) \,d^3x
\,.
\end{split}
 \label{Legendre-Ham}
\end{align}
The canonical momentum is defined by $\bsPi=\de \mathcal{L}/\de( \p_t\bA) = -\bD$, so the two terms in  $\p_t\bA$ in the first line of \eqref{Legendre-Ham} have cancelled in the second line. Taking variations yields the following canonical equations,
\begin{align}
\begin{split}
\p_t \bA &= \frac{\de {H}}{\de \bsPi} = \frac{\de {H}}{\de (-\bD)} = -\,\bE + \nabla\phi\,,
\\
\p_t \bD &= - \,\p_t \bsPi = \frac{\de {H}}{\de \bA} = \bJ - {\rm curl} \frac{\de \ell}{\de \bB}
= \bJ + {\rm curl}\bH\,.
\end{split}
 \label{Dyn-EM-eqs}
\end{align}
These equations, along with the dynamical constraints 
\begin{align}
{\rm div}\bD=\rho
\quad\hbox{and}\quad
{\rm div}\bB=0
\,,
 \label{b+D-constrain}
\end{align}
will yield the equations of electromagnetism corresponding to any choice of the Lagrangian $\ell (\bE\,,\,\bB)$, or the Hamiltonian ${H}(\bD,\bA)$, related to each other by the Legendre transformation \eqref{Legendre-Ham}, so long as it is invertible. The evolution equations for the Poynting vector $\bP$ in \eqref{Poynting-vec-def1} and $\bB$ and $\bD$ in equations \eqref{Dyn-EM-eqs} may be written in 
Lie-Poisson Hamiltonian form as
\begin{align}
\begin{split}
\frac{\partial}{\partial t}
\begin{bmatrix}
P_i  \\ \bB \\ \bD
\end{bmatrix}
&=
\left\{
\begin{bmatrix}
P_i  \\ \bB \\ \bD
\end{bmatrix},
H
\right\}
\quad\hbox{which, upon substituting ${\rm div}\bB=0$, becomes, cf. \eqref{BI-hi-LPB},}
\\&= - 
 \begin{bmatrix}
  ( P_j\partial_i + \partial_j P_i)\Box &   \bB\times{\rm curl}  \Box & \bD\times{\rm curl}\Box  - ({\rm div}\bD) {\Box} 
   \\
   -\,{\rm curl} (\,{\Box} \times \bB) & 0 & {\rm curl}{\Box}
   \\
   -\,{\rm curl} (\,{\Box} \times \bD) + ({\rm div}\bD) {\Box} & -\,{\rm curl}{\Box} & 0
   \end{bmatrix}
\begin{bmatrix}
{\delta H/\delta P_j} \\
{\de H}/{\de \bB} \\
{\de H}/{\de \bD} 
\end{bmatrix},
\end{split}
\label{EM-LPB2}
\end{align}
or, in more geometrical form as, cf. equation \eqref{LP-form},
\begin{align}
\frac{\partial}{\partial t}
\begin{bmatrix}
P  \\ B \\ D
\end{bmatrix}
&=
\left\{
\begin{bmatrix}
P \\ B \\ D
\end{bmatrix},
H
\right\}
= - 
 \begin{bmatrix}
  {\rm ad}^*_\Box P &  \Box\diamond B & \Box\diamond D
   \\
   \pounds_{\Box} B & 0 & d{\Box}
   \\
   \pounds_{\Box} D & -d{\Box} & 0
   \end{bmatrix}
\begin{bmatrix}
{\delta H/\delta P} \\
{\de H}/{\de B} \\
{\de H}/{\de D} 
\end{bmatrix}.
\label{EM-LPB3}
\end{align}
Note that the variational derivatives of $H$ in $\bB$,  $\bD$ and $\bP$ appearing in \eqref{EM-LPB1}, \eqref{EM-LPB2} and \eqref{EM-LPB3} are to be taken independently. 

Expanding out the matrix differential operations and using the equivalence between coadjoint action and Lie derivative for vector fields acting 1-form densities yields,
\begin{align}
\begin{split}
(\p_t + \pounds_{\de H/\de P}) P &= - \frac{\de H}{\de B}\diamond B - \frac{\de H}{\de D}\diamond D 
\,,\\
(\p_t + \pounds_{\de H/\de P}) B &= -\,d\,\frac{\de H}{\de D}
\,,\\
(\p_t + \pounds_{\de H/\de P}) D &= d\,\frac{\de H}{\de B}
\,,\end{split}
\label{EM-geom}
\end{align}
where $\pounds_{\delta \mathbb{H}/\delta P}P$ is the Lie derivative of the 1-form density $P\in\mathfrak{X}^*$ by the vector field $v=\de \mathbb{H}/\delta P\in\mathfrak{X}$; the fluxes $B=\bB\cdot d\bS$ and $D=\bD\cdot d\bS$ are 2-forms in $\Lambda^2(\mathbb{R}^3)$; $\de H/\de B$ and  $\de H/\de D$ are 1-forms; and the diamond $(\diamond)$ operator is defined in the last line of \eqref{P-xi-pairing}.  

The augmented Hamiltonian structure in \eqref{EM-LPB1}, or equivalently \eqref{EM-LPB2} has made  $(P,A,D)\in\mathfrak{X}^*(\mathbb{R}^3)\times T^*\mathcal{C}^\infty(\mathbb{R}^3)$ into a Poisson manifold, whose Poisson structure is given by the sum of the canonical Poisson bracket plus a Lie--Poisson bracket dual in the sense of $L^2$ to the semidirect-product Lie algebra $\mathfrak{X}\,\circledS\,(\Lambda^1\otimes\Lambda^1)$ with dual coordinates $P\in\mathfrak{X}^*$, $B\in\Lambda^2$ and $D\in\Lambda^2$. This sum of a canonical Poisson bracket and a semidirect-product Lie-Poisson bracket derives from the definitions of the magnetic field flux $B:=dA$ and the cotangent lift momentum map $P=D\diamond A$. The augmented bracket in \eqref{EM-LPB2} reveals why the introduction of stochasticity by adding  to the deterministic Hamiltonian the stochastic term $\langle P,\xi(x) \rangle \circ dW_t$ in equation \eqref{LieBrkt-def} introduces a Lie derivative term in the resulting SPDE. 
We call such augmented Poisson structures KM brackets, after \cite{KrMa1987}.

\section{High field pressureless MHD limit of the ABI equations}\label{sec-MHD-form}

Brenier in \cite{Br2004} discusses a ``high field limit'' of the deterministic augmented Born-Infeld (ABI) equations, given in conservative form by
\begin{align}
\begin{split}
&\partial_t h + {\rm div}\,\bP = 0
\,,\\&
\partial_t \bP + {\rm div}
\left( \frac{\bP\otimes\bP}{h} - \frac{\bB\otimes\bB}{h}
\right)
= 0
\,,\\&
\partial_t \bB - {\rm curl}\left( \frac{ \bP\times \bB}{h} \right) = 0
\,,\\&
\hbox{with}\quad {\rm div}\bB = 0
\quad\hbox{and}\quad
\bP\cdot\bB = 0
\,.
\end{split}
\label{ABI-MHD-cons}
\end{align}
Equations \eqref{ABI-MHD-cons} have several similarities with magnetohydrodynamics (MHD), except that the energy density
$h=\sqrt{P^2+B^2}$, with total energy $\bbH=\int h \,d^3x$, differs from MHD. 
As in \eqref{cH+bP-geom} one may write these equations in geometric Lie transport form, 
upon introducing some additional notation. First, $\bv:=\de  \bbH/\de \bP=\bP/h$ and $\bsbeta:=\bB/h$ are components of vector fields,  and $\cL_v$ denotes the Lie derivative with respect to the vector field $v$, whose vector components are given in Euclidean coordinates by, e.g., $v=\bv\cdot\nabla=v^j\partial_j$. See equation \eqref{Lie-deriv-1form} for the vector components of the Lie derivative $\cL_v(\bv\cdot d\bx )$. Equations \eqref{ABI-MHD-cons} are then given equivalently by
\begin{align}
\begin{split}
(\partial_t + \cL_v)(h\,d^3x) &= 0
\,,\\
(\partial_t + \cL_v)(\bv\cdot d\bx ) &= \cL_\beta(\bsbeta\cdot d\bx ) - d|\bsbeta|^2
= - \bsbeta\times {\rm curl}\bsbeta \cdot d\bx 
\,,\\
(\partial_t + \cL_v)(\bB\cdot d\bS ) &= 0
\,,\\
\hbox{with}\quad {\rm div}\bB = 0
\quad\hbox{and}
&\quad
\bP\cdot\bB = 0
\,.\end{split}
\label{BI-hi-field-geom}
\end{align}
The second equation in \eqref{BI-hi-field-geom} looks like the motion equation for a pressureless version of classical MHD, in which $- \bsbeta\times {\rm curl}\bsbeta$ on its right hand side is to be regarded as the $\mathbf{J}\times\bB$ force.  

The high-field limit ABI MHD equations in conservative form \eqref{ABI-MHD-cons} and geometric form \eqref{BI-hi-field-geom} preserve  the constraints ${\rm div}\bB = 0$ and $\bP\cdot\bB=0$, provided they hold initially and $h$ is finite. Preservation of the constraint ${\rm div}\bB = 0$ is obvious. However, preservation of the constraint $\bP\cdot\bB=0$ requires a short calculation, as
\begin{align}
\begin{split}
&(\partial_t + \cL_v)\Big((\bv\cdot d\bx )\wedge (\bB\cdot d\bS )\Big)
=
(\partial_t + \cL_v) \big(\bv\cdot\bB \,d^3x\big)
\\&=
 - \bsbeta\times {\rm curl}\bsbeta \cdot d\bx \wedge (\bB\cdot d\bS )
 + (\bv\cdot d\bx )\wedge(\partial_t + \cL_v) (\bB\cdot d\bS )
\\&= 0
\\&=
\Big( \partial_t(\bv\cdot\bB) 
+ {\rm div}\big(\bv(\bv\cdot\bB)\big) \Big)\,d^3x 
\,.\end{split}
\label{BI-hi-constraint}
\end{align}
Consequently, we have a continuity equation for the quantity $\bv\cdot\bB$; namely
\begin{align}
 \partial_t(\bv\cdot\bB) 
+ {\rm div}\big(\bv(\bv\cdot\bB)\big)= 0
\,.
\label{BI-hi-constraint}
\end{align}
Thus, if the quantity $\bv\cdot\bB = \bP\cdot\bB/h$ vanishes initially, and $h$ does not vanish, then the constraint $\bP\cdot\bB=0$ is preserved by the ABI equations in \eqref{ABI-MHD-cons}, or equivalently, \eqref{BI-hi-field-geom}.

Since the Lie derivative commutes with the spatial differential, the differential of the second equation in \eqref{BI-hi-field-geom} immediately implies for the ABI \emph{vorticity} $\bom$, defined by 
\begin{align}
\bom\cdot d\bS = {\rm curl}\bv \cdot d\bS = d(\bv\cdot d\bx ) = d\big((\bP/h)\cdot d\bx \big) 
\,,\label{BI-def-omega}
\end{align}
that the evolution of ABI vorticity $\bom$ is given by 
\begin{align}
(\partial_t + \cL_v)(\bom\cdot d\bS ) = - {\rm curl}\big(\bsbeta\times {\rm curl}\bsbeta\big) \cdot d\bS \ne 0
\,.
\label{BI-hi-field-omega}
\end{align}
Consequently, the flux of ABI vorticity $\bom\cdot d\bS={\rm curl}(\bP/h) \cdot d\bS$ for hi-field MHD ABI is not frozen into the flow, as it is for Euler fluid flow.

The last equation in \eqref{BI-hi-field-geom} implies conservation of the linking number known as the \emph{magnetic helicity} $\Lambda_{mag}$ defined for $\bA={\rm curl}^{-1}\bB$ and ${\rm div}\bB=0$ by
\begin{align}
\Lambda_{mag}&=\int \lambda_{m}\,d^3x=\int \bA\cdot\bB\,d^3x =\int \bA\cdot d\bx\wedge \bB\cdot d\bS \,.
\label{BI-hi-mag-helicity}
\end{align}
This is the well-known topological winding number of ideal MHD. 
\begin{remark}
In terms of the energy Hamiltonian 
\begin{align}
\bbH = \int\!\! \sqrt{P^2 + B^2}\,d^3x =: \int  h ( \bP,\bB)\,d^3x 
\label{BI-hi-field-Ham}
\end{align}
we may write the deterministic MHD ABI equations \eqref{BI-hi-field-geom} in Lie-Poisson Hamiltonian form \cite{HoMaRa1998} as
\begin{align}
\frac{\partial}{\partial t}
\begin{bmatrix}
P_i  \\ \bB
\end{bmatrix}
=
\left\{
\begin{bmatrix}
P_i  \\ \bB
\end{bmatrix},
\bbH
\right\}
= - 
 \begin{bmatrix}
  ( P_j\partial_i + \partial_j P_i)\Box &   \bB\times{\rm curl}  \Box
   \\
   -\,{\rm curl} (\,{\Box} \times \bB)  & 0 
   \end{bmatrix}
\begin{bmatrix}
{\delta \bbH/\delta P_j} = P^j/h = v^j\\
{\de \bbH}/{\de \bB} = \bB/ h = \bsbeta
\end{bmatrix}
\,,
\label{BI-hi-LPB}
\end{align}
where the boxes $(\Box)$ indicate how the differential operators in the matrix elements act on the variational derivatives of the Hamiltonian in $P_j$ and $\bB$. (Note that these variations are taken independently.)
After a direct calculation, the Lie-Poisson bracket \eqref{BI-hi-LPB} recovers 
\begin{align}
\partial_th = \{h\,,\,\bbH\} = -\,{\rm div}\left(  \frac{P^2+B^2}{h} \bv\right) + {\rm div}(\bB(\bv\cdot\bB))
=  -\, {\rm div} \Big(\big(\sqrt{P^2 + B^2}\big)\bv\Big) =  -\, {\rm div}(h\bv)
\,,
\label{BI-hi-LPBh}
\end{align}
from orthogonality $\bv\cdot\bB = 0$ and the definition of $h$ in \eqref{BI-hi-field-Ham}. 
This calculation shows that orthogonality $\bP\cdot\bB = 0$ and the continuity equation for $h$, together with   
the skew adjointness of the Lie-Poisson bracket in \eqref{BI-hi-LPB}, ensure that $d\bbH/dt=\{\bbH,\bbH\}=0$ for appropriate boundary conditions and that the energy Hamiltonian in \eqref{BI-hi-field-Ham} is conserved for any choice of boundary conditions on $\bv$ and $\bB$.
\end{remark}
\bigskip

\paragraph{\bf Covariance upon introducing stochasticity.}
Stochastic Lie transport that is introduced by coupling the noise to the momentum map simply modifies the transport operator in equations \eqref{BI-hi-field-geom}, as
\begin{align}
(\partial_t + \cL_{v}) \to (\dd + \cL_{\tilde{v}})
\quad\hbox{with } \tilde{v} = \bsvt\cdot\nabla\hbox{ and}\quad
\bsvt :=(\bP/h) dt + \sum_i \bsxi_i (\bx)\circ dW^i_t
\,,
\label{BI-hi-field-stoch}
\end{align}
in which 
\begin{align}
\bsvt = \frac{\delta \widetilde{\bbH}}{\delta \bP}  
\quad\hbox{where}\quad
\widetilde{\bbH} := \int\!\! \sqrt{P^2 + B^2}\,d^3x\,dt + \int \bP\cdot \sum_i \bsxi_i (\bx)\,d^3x \,\circ dW^i_t
\,.
\label{BI-hi-field-vel}
\end{align}
Consequently, in the previous equations \eqref{BI-hi-field-geom} and \eqref{BI-hi-constraint} the terms with Lie derivative $ \cL_v$ are simply replaced with $ \cL_{\tilde{v}}$ when Lie transport stochasticity in equation \eqref{BI-hi-field-stoch} is introduced, while the other terms are left unchanged. Thus, in the high field limit of the augmented Born-Infeld equations, the introduction of Lie transport stochasticity preserves their geometric form. In addition, the Lie-Poisson bracket in equation \eqref{BI-hi-LPB} also persists under the introduction of noise this way, although the Hamiltonian becomes stochastic. 


\end{document}